
\documentclass[oribibl]{llncs}
\usepackage{llncsdoc}
\usepackage[english]{babel}
\usepackage{verbatim}

\usepackage{psfrag}
\usepackage{graphicx}
\usepackage{textcomp}
\usepackage{xcolor}

\usepackage{amsmath}
\usepackage{amsmath}
\usepackage{amsfonts}
\usepackage{amssymb}
\usepackage{mathrsfs}
\usepackage{multirow}

\usepackage{pstricks}
\usepackage{pifont}
\usepackage{pxfonts}
\usepackage{latexsym}
\usepackage{booktabs}
\usepackage{array}
\usepackage{color}
\usepackage{colortbl}


\newcommand{\st}{:}
\newcommand{\MK}{\mathsf{MK}}
\newcommand{\PK}{\mathsf{PK}}
\newcommand{\SK}{\mathsf{SK}}
\newcommand{\CT}{\mathsf{CT}}
\newcommand{\Dec}{\mathsf{Decrypt}}
\newcommand{\Enc}{\mathsf{Encrypt}}
\newcommand{\Key}{\mathsf{KeyGen}}
\newcommand{\Set}{\mathsf{Setup}}

\newcommand{\A}{\mathcal{A}}
\newcommand{\B}{\mathcal{B}}
\newcommand{\E}{\mathcal{E}}


\begin{document}


\title{Key-Policy Multi-Authority Attribute-Based Encryption}
\author{Riccardo Longo\thanks{\tt riccardolongomath@gmail.com, \it Department of Mathematics, University of Trento},\, Chiara Marcolla\thanks{ \tt chiara.marcolla@gmail.com, \it Department of Mathematics, University of Turin},\, Massimiliano Sala\thanks{\tt maxsalacodes@gmail.com, \it Department of Mathematics, University of Trento}}
\institute{}
\maketitle


\begin{abstract}
Bilinear groups are often used to create Attribute-Based Encryption (ABE) algorithms.
In particular, they have been used to create an ABE system with multi authorities, but limited to the ciphertext-policy instance.
Here, for the first time, we propose a multi-authority key-policy ABE system.\\
In our proposal, the authorities may be set up in any moment and without any coordination.
A party can simply act as an ABE authority by creating its own public parameters and issuing private keys to the users.
A user can thus encrypt data choosing both a set of attributes and a set of trusted authorities, maintaining full control unless all his chosen authorities collude against him.\\
We prove our system secure under the bilinear Diffie-Hellman assumption.
\end{abstract}


\keywords{ABE, Bilinear Groups, Algebraic Cryptography}



\section{Introduction}
\label{intro}
  The key feature that makes the cloud so attracting nowadays is the great accessibility it provides: users can access their data through the Internet from anywhere.
  Unfortunately, at the moment the protection offered for sensitive information is questionable and access control is one of the greatest concerns.
  Illegal access may come from outsiders or even from insiders without proper clearance.
  One possible approach for this problem is to use Attribute-Based Encryption (ABE) that provides cryptographically enhanced access control functionality in encrypted data.\\

    ABE developed from Identity Based Encryption, a scheme proposed by Shamir \cite{shamir1985identity} in 1985 with the first constructions obtained in 2001 by Boneh and Franklin \cite{boneh2001identity}.
    The use of bilinear groups, in particular the Tate and Weil pairings on elliptic curves \cite{boneh2001identity}, was the winning strategy that finally allowed to build  schemes following the seminal idea of Shamir.
    Bilinear groups came in nicely when a preliminary version of ABE was invented by Sahai and Waters \cite{sahai2005fuzzy} in 2005.
    Immediately afterwards, Goyal, Pandey, Sahai, and Waters  \cite{goyal2006attribute} formulated the two complimentary forms of ABE
    which are nowadays standard: ciphertext-policy ABE  and key-policy ABE.
    In a ciphertext-policy ABE system, keys are associated with sets of attributes and ciphertexts are associated with access policies.
    In a KP-ABE system, the situation is reversed: keys are associated with access policies and ciphertexts are associated with sets of attributes.
    Several developments in efficiency and generalizations have been obtained
    for key-policy ABE, e.g. \cite{attrapadung2012attribute}, \cite{hohenberger2013attribute}, \cite{ostrovsky2007attribute}.
    A first implementation of ciphertext-policy ABE has been achieved by Bethencourt et al. \cite{bethencourt2007ciphertext} in 2007 but the proofs of security of the ciphertext-policy ABE remained unsatisfactory since they were based on an assumption independent of the algebraic structure of the group (the generic group model).
    It is only with the work of Waters \cite{waters2011ciphertext} that the first non-restricted ciphertext-policy ABE scheme was built with a security dependent on variations of the DH assumption on bilinear groups.
    Related to the work we propose in this paper is the construction for multiple authorities (ciphertext-policy ABE) that have been proposed in \cite{chase2007multi}, \cite{chase2009improving} and \cite{lewko2011decentralizing}. \\
    However, before the present paper no multi-authority KP-ABE scheme has appeared in the literature with a proof of security.

\paragraph*{Our construction}
  In this paper we present the first multi-authority KP-ABE scheme.
  In our system, after the creation of an initial set of common parameters, the authorities may be set up in any moment and without any coordination.
  A party can simply act as an ABE authority by creating a public parameters and issuing private keys to different users (assigning access policies while doing so).
  A user can encrypt data under any set of attributes specifying also a set of \emph{trusted} authorities, so the encryptor maintains high control.
  Also, the system does not require any central authority.
  Our scheme has both very short single-authority keys, that compensate the need of multiple keys (one for authority), and  also very short ciphertexts. Moreover, the pairing computations in the bilinear group are involved only during the decryption phase, obtaining this way significant advantages in terms of encryption times.

  Even if the authorities are collaborating, the existence of just one non-cheating authority guarantees that no illegitimate party (including authorities) has access to the encrypted data.

  We prove our scheme secure using the classical bilinear Diffie-Hellman assumption.

  \paragraph*{Organization}
  This paper is organized as follows.
  In Section~\ref{lsss} we present the main mathematical tools used in the construction of multi authority KP-ABE scheme.
  In Section~\ref{schema} we  explain in detail our multi authority KP-ABE scheme and its security is proven under standard, non-interactive assumptions in the selective set model.
  Finally conclusions are drawn in Section \ref{conc}.


\section{Preliminaries}
\label{lsss}
We do not prove original results here, we only provide what we need for our construction.
See the cited references for more details on these arguments.\\

Let $\mathbb{G}_1, \mathbb{G}_2$ be groups of the same prime order $p$.
    \begin{definition}[Pairing]
      A symmetric pairing is a bilinear map $e$ such that $e: \mathbb{G}_1 \times \mathbb{G}_1 \rightarrow \mathbb{G}_2$ has the following properties:
      \begin{itemize}
        \item Bilinearity: $\forall g, h \in \mathbb{G}_1, \forall a, b \in \mathbb{Z}_p, \quad e(g^a, h^b) = e(g, h)^{ab}$.
        \item Non-degeneracy: for $g$ generator of $\mathbb{G}_1$, $e(g, g)\neq 1$.
      \end{itemize}
    \end{definition}

    \begin{definition}[Bilinear Group]
      $\mathbb{G}_1$ is a Bilinear group if the conditions above hold and both the group operations in $\mathbb{G}_1$ and $\mathbb{G}_2$ as well as the bilinear map $e$ are efficiently computable.
    \end{definition}

  Let $a, b, s, z \in \mathbb{Z}_p$ be chosen at random and $g$ be a generator of the bilinear group $\mathbb{G}_1$.
  The decisional bilinear Diffie-Hellman (BDH) problem consists in constructing an algorithm $\B(A = g^a, B=g^b, S=g^s, T) \rightarrow \{0,1\}$ to efficiently distinguish between the tuples $(A,B,S,e(g,g)^{abs})$ and $(A,B,S,e(g,g)^{z})$ outputting respectively 1 and 0.
  The advantage of $\mathcal{B}$ is:
  $$
    Adv_{\B} = \Big|\text{Pr}\left[\B(A, B, S, e(g, g)^{abs}) = 1\right] - \text{Pr}\left[\B(A, B, S, e(g, g)^z )= 1\right]\Big|
  $$

\noindent where the probability is taken over the random choice of the generator $g$, of $a, b, s, z$ in $\mathbb{Z}_p$, and the random bits possibly consumed by $\B$ to compute the response.

    \begin{definition}[BDH Assumption] \label{BDH}
      The decisional BDH assumption holds if no probabilistic polynomial-time algorithm $\B$ has a non-negligible advantage in solving the decisional BDH problem.
    \end{definition}

 Access structures define who may and who may not access the data, giving the sets of attributes that have clearance.

\begin{definition}[Access Structure]
    An access structure $\mathbb{A}$ on a universe of attributes $U$ is the set of the subsets $S \subseteq U$ that are authorized.
    That is, a set of attributes $S$ satisfies the policy described by the access structure $\mathbb{A}$ if and only if $S \in \mathbb{A}$.
\end{definition}

  They are used to describe a policy of access, that is the rules that prescribe who may access to the information.
  If these rules are constructed using only \textsc{AND, OR} and \textsc{threshold} operators on the attributes, then the access structure is \emph{monotonic}.

  \begin{definition}[Monotonic Access Structure]
    \label{mono}
    An access structure $\mathbb{A}$ is said to be monotonic if given $S_0 \subseteq S_1 \subseteq U$ it holds
$$
      S_0 \in \mathbb{A} \Longrightarrow S_1 \in \mathbb{A}
$$
  \end{definition}

  An interesting property is that monotonic access structures (i.e. access structures $\mathbb{A}$ such that if $S$ is an authorized set and $S \subseteq S'$ then also $S'$ is an authorized set) may be associated to linear secret sharing schemes (LSSS).
  In this setting the parties of the LSSS are the attributes of the access structure.

  A LSSS may be defined as follows (adapted from \cite{beimel1996secure}).
  \begin{definition}[Linear Secret-Sharing Schemes (LSSS)]
    \label{lsssdef}
    A secret-sharing scheme $\Pi$ over a set of parties $P$ is called linear (over $\mathbb{Z}_p$) if
    \begin{enumerate}
      \item The shares for each party form a vector over $\mathbb{Z}_p$.
      \item
        There exists a matrix $M$ with $l$ rows and $n$ columns called the share-generating matrix for $\Pi$.
        For all $i \in \{1,\ldots,l\}$ the $i$-th row of $M$ is labeled via a function $\rho$, that associates $M_i$ to the party $\rho(i)$.
        Considering the vector $\vec{v} = (s, r_2, \ldots, r_n) \in \mathbb{Z}_p^n$, where $s \in \mathbb{Z}_p$ is the secret to be shared, and $r_i \in \mathbb{Z}_p$, with $i \in \{2,\ldots,n\}$ are randomly chosen, then $M \vec{v}$ is the vector of $l$ shares of the secret $s$ according to $\Pi$.
        The share $(M\vec{v})_i = M_i \vec{v}$ belongs to party $\rho(i)$.
    \end{enumerate}
  \end{definition}

  It is shown in \cite{beimel1996secure} that every linear secret sharing-scheme according to the above definition also enjoys the linear reconstruction property, defined as follows: suppose that $\Pi$ is an LSSS for the access structure $\mathbb{A}$. Let $S \in  \mathbb{A}$ be any authorized set, and let $I \subseteq \{1,\ldots,l\}$ be defined as $I = \{i \st \rho(i) \in  S\}$.
  Then, there exist constants $w_i \in  \mathbb{Z}_p$, with $i\in I$ such that, if ${\lambda_i }$ are valid shares of any secret $s$ according to $\Pi$, then
  \begin{equation}
    \sum_{i\in I} w_i \lambda_i = s
  \end{equation}

  Furthermore, it is shown in \cite{beimel1996secure} that these constants $w_i$ can be found in time polynomial in the size of the share-generating matrix $M$.

  Note that the vector $(1, 0, \ldots , 0)$ is the “target” vector for the linear secret sharing scheme.
  Then, for any set of rows $I$ in $M$, the target vector is in the span of $I$ if and only if $I$ is an authorized set.
  This means that if $I$ is not authorized, then for any choice of $c \in \mathbb{Z}_p$ there will exist a vector $\vec{u}$ such that $u_1 = c$ and
$$
    M i \cdot \vec{w} = 0 \qquad \forall i \in I
$$

  In the first ABE schemes the access formulas are typically described in terms of access trees.
  The appendix of \cite{lewko2011decentralizing} is suggested for a discussion of how to perform a conversion from access trees to LSSS.

  See \cite{goyal2006attribute}, \cite{beimel1996secure} and \cite{liu2010efficiently} for more details about LSSS and access structures.


\section{Our Construction}
\label{schema}

This section is divided in three parts. We start with  definitions of Multi-Authority Key-Policy ABE and of CPA selective security. In the second part we present in detail our first scheme and, finally, we prove the security of this scheme under the classical BDH assumption in the selective set model.

A security parameter will be used to determine the size of the bilinear group used in the construction, this parameter represents the order of complexity of the assumption that provides the security of the scheme.
Namely, first the complexity is chosen thus fixing the security parameter, then this value is used to compute the order that the bilinear group must have in order to guarantee the desired complexity, and finally a suitable group is picked and used.

\subsection{Multi Authority KP-ABE Structure and Security}\label{KP-ABE}

  In this scheme, after the common universe of attributes and bilinear group are agreed, the authorities set up independently their master key and public parameters.
  The master key is subsequently used to generate the private keys requested by users.
  Users ask an authority for keys that embed a specific access structure, and the authority issues the key only if it judges that the access structure suits the user that requested it.
  Equivalently an authority evaluates a user that requests a key, assigns an access structure, and gives to the user a key that embeds it.
  When someone wants to encrypt, it chooses a set of attributes that describes the message (and thus determines which access structures may read it) and a set of trusted authorities.
  The ciphertext is computed using the public parameters of the chosen authorities, and may be decrypted only using a valid key for each of these authorities.
  A key with embedded access structure $\mathbb{A}$ may be used to decrypt a ciphertext that specifies a set of attributes $S$ if and only if $S \in \mathbb{A}$, that is the structure considers the set authorized.

  This scheme is secure under the classical BDH assumption in the selective set model, in terms of chosen-ciphertext indistinguishability.

  The security game is formally defined as follows.\\

  \noindent Let $\mathcal{E}=(\Set,\Enc,\Key,\Dec)$ be a MA-KP-ABE scheme for a message space $\mathcal M$, a universe of authorities $X$ and an access structure space $\mathcal{G}$ and consider the following MA-KP-ABE experiment $\mathsf{MA}$-$\mathsf{KP}$-$\mathsf{ABE}$-$\mathsf{Exp}_{\A,\E}(\lambda,U)$  for an adversary $\mathcal A$, parameter $\lambda$ and attribute universe $U$:
 \begin{description}
      \item [Init.]
        The adversary declares the set of attributes $S$ and the set of authorities $A \subseteq X$ that it wishes to be challenged upon. Moreover it selects the \textit{honest authority} $k_0 \in A$.
      \item [Setup.]
        The challenger runs the Setup algorithm, initializes the authorities and gives to the adversary the public parameters.
      \item [Phase I.]
        The adversary issues queries for private keys of any authority, but $k_0$ answers only to queries for keys for access structures $\mathbb{A}$ such that $S \notin \mathbb{A}$.
        On the contrary the other authorities respond to every query.
      \item [Challenge.]
        The adversary submits two equal length messages $m_0$ and $m_1$.
        The challenger flips a random coin $b \in \{0, 1\}$, and encrypts $m_b$ with $S$ for the set of authorities $A$.
        The ciphertext is passed to the adversary.
      \item [Phase II.]
        Phase \texttt{I} is repeated.
      \item [Guess.]
        The adversary outputs a guess $b'$ of $b$.
  \end{description}

  \begin{definition}[MA-KP-ABE Selective Security]
  The MA-KP-ABE scheme $\E$ is  CPA selective secure (or secure against   chosen-plaintext attacks) for attribute universe $U$ if for all probabilistic polynomial-time adversaries $\mathcal A$, there exists a negligible function $negl$ such that:
  $$\Pr[\mathsf{MA}\mbox{-}\mathsf{KP}\mbox{-}\mathsf{ABE}\mbox{-}\mathsf{Exp}_{\A,\E}(\lambda,U) = 1] \leq \frac{1}{2} + negl(\lambda).$$
  \end{definition}

\subsection{The Scheme}
  The scheme plans a set $X$ of independent authorities, each with their own pa\-ra\-me\-ters, and it sets up an encryption algorithm that lets the encryptor choose a set $A \subseteq X$ of authorities, and combines the public parameters of these in such a way that an authorized key for each authority in $A$ is required to successfully decrypt.\\
  Our scheme consists of three randomized algorithms ($\Set, \Key,$ $\Enc$) plus the decryption $\Dec$.
  The techniques used are inspired from the scheme of Goyal et al. in \cite{goyal2006attribute}.
  The scheme works in a bilinear group $\mathbb{G}_1$ of prime order $p$, and uses LSSS matrices to share secrets according to the various access structures.
  Attributes are seen as elements of $\mathbb{Z}_p$.

  The description of the algorithms follows.

  \paragraph*{$\Set (U, g, \mathbb{G}_1) \rightarrow (\PK_k, \MK_k).$}
    Given the universe of attributes $U$ and a generator $g$ of $\mathbb{G}_1$ each authority sets up independently its parameters.
    For $k \in X$ the Authority $k$ chooses uniformly at random $\alpha_k \in \mathbb{Z}_p$, and $z_{k, i} \in \mathbb{Z}_p$ for each $i \in U$.
    Then the public parameters $\PK_k$ and the master key $\MK_k$ are:
    $$
      \PK_k = \left(e(g,g)^{\alpha_k}, \{g^{z_{k, i}}\}_{i \in U} \right) \qquad \MK_k = \left(\alpha_k, \{z_{k, i}\}_{i \in U}\} \right)
    $$

  \paragraph*{$\Key_k(\MK_k, (M_k, \rho_k)) \rightarrow \SK_k.$}
    The key generation algorithm for the authority $k$ takes as input the master secret key $\MK_k$ and an LSSS access structure $(M_k, \rho_k)$, where $M_k$ is an $l\times n$ matrix on $\mathbb{Z}_p$ and $\rho_k$ is a function which associates rows of $M_k$ to attributes.
    It chooses uniformly at random a vector $\vec{v}_k \in \mathbb{Z}_p^n$ such that $v_{k,1} = \alpha_k$.
    Then it computes the shares $\lambda_{k,i} = M_{k,i} \vec{v}_k$ for $1 \leq i \leq l$ where $M_{k, i}$ is the $i$-th row of $M_k$.
    Then the private key $\SK_k$ is:
    $$
      \SK_k = \left\{K_{k, i} = g^{\frac{\lambda_{k,i}}{z_{k, \rho_{k}(i)}}}\right\}_{1\leq i \leq l}
    $$

  \paragraph*{$\Enc(m, S, \{\PK_k\}_{k \in A}) \rightarrow \CT.$}
    The encryption algorithm takes as input the public parameters, a set $S$ of attributes and a message $m$ to encrypt.
    It chooses $s \in \mathbb{Z}_p$ uniformly at random and then computes the ciphertext as:
    $$
      \CT = \left(S, C' = m \cdot \left(\prod_{k\in A} e(g,g)^{\alpha_k} \right)^s, \{C_{k,i} = (g^{z_{k, i}})^s\}_{k\in A, ~i \in S} \right)
    $$

  \paragraph*{$\Dec(\CT, \{\SK_k\}_{k\in A})  \rightarrow m'.$}
    The input is a ciphertext for a set of attributes $S$ and a set of authorities $A$ and an authorized key for every authority cited by the ciphertext.
    Let $(M_k, \rho_k)$ be the LSSS associated to the key $k$, and suppose that $S$ is authorized for each $k \in A$.
    The algorithm for each $k\in A$ finds $w_{k,i} \in \mathbb{Z}_p, i \in I_k$ such that
    \begin{equation}
      \label{alfak}
      \sum_{i \in I_k} \lambda_{k, i} w_{k, i} = \alpha_k
    \end{equation}
    for appropriate subsets $I_k \subseteq S$ and then proceeds to reconstruct the original message computing:
    \begin{align*}
      m' & = \frac{C'}{\prod_{k \in A} \prod_{i \in I_k} e(K_{k, i}, C_{k, \rho_{k}(i)})^{w_{k,i}}} \\
      & = \frac{m \cdot \left(\prod_{k \in A} e(g,g)^{\alpha_k} \right)^s}{\prod_{k \in A} \prod_{i \in I_k} e\left(g^{\frac{\lambda_{k,i}}{z_{k, \rho_{k}(i)}}}, (g^{z_{k, \rho_{k}(i)}})^s\right)^{w_{k,i}}} \nonumber \\
      & = \frac{m \cdot e(g, g)^{s(\sum_{k\in A}\alpha_k)}}{\prod_{k \in A} e(g, g)^{s \sum_{i\in I_k} w_{k, i} \lambda_{k, i}}} \nonumber \\
      & \stackrel{*}{=} \frac{m \cdot e(g, g)^{s(\sum_{k\in A}\alpha_k)}}{e(g, g)^{s (\sum_{k\in A}\alpha_k)}} = m
    \end{align*}
    Where $\stackrel{*}{=}$ follows from property (\ref{alfak}).

\subsection{Security}
  The scheme is proved secure under the BDH assumption (Definition \ref{BDH}) in a selective set security game in which every authority but one is supposed curious (or corrupted or breached) and then it will issue even keys that have enough clearance for the target set of attributes, while the honest one issues only unauthorized keys.
  Thus if at least one authority remains trustworthy the scheme is secure.\\
  The security  is provided by the following theorem.

  \begin{theorem}
    If an adversary can break the scheme, then a simulator can be constructed to play the Decisional BDH game with a non-negligible advantage.
  \end{theorem}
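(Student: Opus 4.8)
The plan is to prove the statement operationally as a reduction: assuming a probabilistic polynomial-time adversary $\A$ that wins the selective security game with non-negligible advantage $\epsilon$, I would build a simulator $\B$ that, on input a decisional BDH instance $(g^a,g^b,g^s,T)$, plays the challenger for $\A$ and then decides whether $T=e(g,g)^{abs}$ (cf.\ Definition~\ref{BDH}). The guiding idea is to plant the unknown product $ab$ into the secret exponent $\alpha_{k_0}$ of the \emph{honest} authority $k_0$ and to reuse $s$ as the encryption randomness, so that the honest factor $e(g,g)^{\alpha_{k_0}s}$ of the challenge ciphertext coincides with the BDH term $e(g,g)^{abs}$.

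First I would handle Init and Setup. After $\A$ declares the target set $S$, the authority set $A$ and the honest authority $k_0\in A$, $\B$ sets $e(g,g)^{\alpha_{k_0}}:=e(g^a,g^b)$, thus implicitly $\alpha_{k_0}=ab$ without knowing it. For every $k\in A\setminus\{k_0\}$ (and every authority outside $A$) $\B$ chooses $\alpha_k$ and all $z_{k,i}$ itself, so it owns their full master keys $\MK_k$. For the honest authority's attribute parameters it programs $z_{k_0,i}=\zeta_{k_0,i}$ with a known random $\zeta_{k_0,i}$ when $i\in S$, and implicitly $z_{k_0,i}=a\,y_{k_0,i}$ (publishing $(g^a)^{y_{k_0,i}}$) with a known random $y_{k_0,i}$ when $i\notin S$. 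Every published element of each $\PK_k$ is then computable from the instance.

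The heart of the argument is answering the Phase I/II key queries to $k_0$, which by the rules of the game concern only access structures $(M_{k_0},\rho_{k_0})$ that do \emph{not} authorize $S$. Here I would invoke the unauthorized-set property recalled in Section~\ref{lsss}: since $I=\{i\st \rho_{k_0}(i)\in S\}$ is not authorized, there is a vector $\vec u$ with $u_1=1$ and $M_{k_0,i}\cdot\vec u=0$ for all $i\in I$. Picking a fresh random $\vec r$ with $r_1=0$, $\B$ implicitly sets $\vec v_{k_0}=ab\,\vec u+a\,\vec r$, which has first coordinate $ab=\alpha_{k_0}$ and the correct uniform distribution, so $\lambda_{k_0,i}=ab(M_{k_0,i}\cdot\vec u)+a(M_{k_0,i}\cdot\vec r)$. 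For $i$ with $\rho_{k_0}(i)\in S$ the first term vanishes and $z_{k_0,i}=\zeta_{k_0,i}$ is known, giving $K_{k_0,i}=(g^a)^{(M_{k_0,i}\cdot\vec r)/\zeta_{k_0,i}}$; for $i$ with $\rho_{k_0}(i)\notin S$ the factor $a$ in $z_{k_0,i}=a\,y_{k_0,i}$ cancels the $a$ present in both terms, giving $K_{k_0,i}=(g^b)^{(M_{k_0,i}\cdot\vec u)/y_{k_0,i}}\cdot g^{(M_{k_0,i}\cdot\vec r)/y_{k_0,i}}$. Both are computable from $(g^a,g^b)$, and keys for all other authorities are produced by the genuine $\Key$ algorithm since $\B$ holds their master keys.

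Finally, in the Challenge phase $\B$ receives $m_0,m_1$, flips $b$, and returns $C'=m_b\cdot T\cdot\prod_{k\in A\setminus\{k_0\}}e(g^s,g)^{\alpha_k}$ together with $C_{k,i}=(g^s)^{z_{k,i}}$ for $k\in A,\ i\in S$ (using that $z_{k,i}$ is known for these indices). When $T=e(g,g)^{abs}$ this is a perfectly distributed encryption of $m_b$, whereas when $T=e(g,g)^z$ the factor $C'$ is uniform and independent of $b$; hence $\B$ outputs $1$ iff $\A$'s guess $b'$ equals $b$, succeeding with probability $\frac{1}{2}+\epsilon$ in the real case and exactly $\frac{1}{2}$ in the random case, for a BDH advantage of about $\frac{\epsilon}{2}$. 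I expect the main obstacle to be exactly the third paragraph: coupling the programming of the $z_{k_0,i}$ (known on $S$, carrying the factor $a$ off $S$) with the decomposition $\vec v_{k_0}=ab\,\vec u+a\,\vec r$ so that \emph{every} honest-authority key for an unauthorized policy is simulatable without knowledge of $ab$, while certifying that the public parameters, keys and challenge ciphertext are distributed exactly as in the real scheme (up to the negligible event $a=0$); this distributional faithfulness is the delicate point.
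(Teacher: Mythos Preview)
Your reduction is correct and follows the same GPSW-style selective strategy as the paper: program the honest authority's attribute exponents to carry one of the BDH generators on attributes outside the challenge set $S$, exploit the LSSS fact that an unauthorized set admits a vector $\vec u$ with $u_1=1$ orthogonal to all $S$-rows in order to kill the unknown $ab$ on those rows, and recycle $g^s$ as the encryption randomness so that the BDH target $T$ lands exactly where $e(g,g)^{s\sum_k\alpha_k}$ should be.

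The one genuine difference is the embedding. The paper sets $\alpha_k=-r_kb$ for $k\neq k_0$ and $\alpha_{k_0}=ab+b\sum_{k\neq k_0}r_k$, so the sum telescopes to $ab$; it then puts the factor $b$ into $z_{k,\rho(i)}$ for $\rho(i)\notin S$ and uses a sharing vector of the form $b(v_j+(a+\sum r_k-v_1)y_j)$. You instead put $\alpha_{k_0}=ab$ directly, hand every other authority a fully known master key, use $a$ (rather than $b$) as the cancelling factor in $z_{k_0,\rho(i)}$ for $\rho(i)\notin S$, and take the sharing vector $ab\,\vec u+a\,\vec r$. Your route is a bit cleaner: the simulator can run the genuine $\Key$ algorithm for every non-honest authority, whereas the paper must also simulate those keys via $g^b$. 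Both variants yield the same advantage $\epsilon/2$ and hinge on the identical LSSS argument, so the distinction is essentially cosmetic; your distributional check (uniformity of the simulated $\vec v_{k_0}$ and of the $z_{k_0,i}$, conditioned on $a\neq 0$) is exactly the point the paper leaves implicit.
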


  \begin{proof}
    Suppose there exists a polynomial-time adversary $\mathcal{A}$, that can attack the scheme in the Selective-Set model with advantage $\epsilon$.
    Then a simulator $\mathcal{B}$ can be built that can play the Decisional BDH game with advantage $\epsilon/2$.
    The simulation proceeds as follows.

    \paragraph*{Init}
      The simulator takes in a BDH challenge $g, g^a, g^b, g^s, T$.
      The adversary gives the algorithm the challenge access structure $S$.

    \paragraph*{Setup}
      The simulator chooses random $r_k \in \mathbb{Z}_p$ for $k \in A \setminus \{k_0\}$ and implicitly sets $\alpha_k = -r_k b$ for $k \in A \setminus \{k_0\}$ and $\alpha_{k_0} = a b + b \sum_{k \in A \setminus \{k_0\}} r_k$ by computing:
      \begin{align*}
        e(g,g)^{\alpha_{k_0}} &= e(g^a, g^b) \prod_{k \in A \setminus \{k_0\}} (g^b, g^{r_k}) \\
        e(g,g)^{\alpha_k} &= e(g^b, g^{-r_k}) \qquad \forall k \in A \setminus \{k_0\}
      \end{align*}
      Then it chooses $z_{k, i}' \in \mathbb{Z}_p$ uniformly at random for each $i \in U$, $k \in A$ and implicitly sets
      $$
        z_{k, i} =
        \begin{cases}
          z_{k, i}' &\text{ if } i \in S \\
          b z_{k, i}' &\text{ if } i \notin S
        \end{cases}
      $$
      Then it can publish the public parameters computing the remaining values as:
      $$
        g^{z_{k, i}} =
        \begin{cases}
          g^{z_{k, i}'} &\text{ if } i \in S \\
          (g^b)^{z_{k, i}'} &\text{ if } i \notin S
        \end{cases}
      $$

    \paragraph*{Phase I}
      In this phase the simulator answers private key queries.
      For the queries made to the authority $k_0$ the simulator has to compute the $K_{k_0,i}$ values of a key for an access structure $(M,\rho)$ with dimension $l \times n$ that is not satisfied by $S$.
      Therefore for the properties of an LSSS it can find a vector $\vec{y} \in \mathbb{Z}_p^n$ with $y_1 = 1$ fixed such that
      \begin{equation}\label{y-sempl}
        M_i \vec{y} = 0 \qquad \forall i \mbox{ such that } \rho(i) \in S
      \end{equation}
      Then it chooses uniformly at random a vector $\vec{v} \in \mathbb{Z}_p^n$ and implicitly sets the shares of $\alpha_{k_0} = b(a+ \sum_{k \in A \setminus \{k_0\}} r_k)$ as
      $$
        \lambda_{k_0,i} = b \sum_{j=1}^n M_{i,j} (v_j + (a + \sum_{k \in A \setminus \{k_0\}} r_k - v_1)y_j)
      $$
      Note that $\lambda_{k_0,i} = \sum_{j=1}^n M_{i,j} u_j$ where $u_j = b (v_j + (a + \sum_{k \in A \setminus \{k_0\}} r_k - v_1)y_j)$ thus \linebreak[4] ${u_1 = b (v_1 + (a + \sum_{k \in A \setminus \{k_0\}} r_k - v_1) 1) = ab + b \sum_{k \in A \setminus \{k_0\}} r_k= \alpha_{k_0} }$ so the shares are valid.
      Note also that from (\ref{y-sempl}) it follows that
      $$
        \lambda_{k_0,i} = b \sum_{j=1}^n M_{i,j}v_j \quad \forall i \mbox{ such that } \rho(i) \in S
      $$
      Thus if $i \mbox{ is such that } \rho(i) \in S$ the simulator can compute
      $$
        K_{k_0,i} = (g^b)^{\frac{\sum_{j=1}^n M_{i,j}v_j}{z_{k_0,\rho(i)}'}} = g^{\frac{\lambda_{k_0,i}}{z_{k_0, \rho(i)}}}
      $$
      Otherwise, if $i \mbox{ is such that } \rho(i) \notin S$ the simulator computes
      \begin{align*}
        K_{k_0,i} &= g^{\frac{\sum_{j=1}^n M_{i,j}(v_j +(r - v_1)y_j)}{z_{k_0,\rho(i)}'}} (g^a)^{\frac{\sum_{j=1}^n M_{i,j}y_j}{z_{k_0,\rho(i)}'}}  = g^{\frac{\lambda_{1,i}}{z_{k_0, \rho(i)}}}
      \end{align*}
      Remembering that in this case $z_{k_0, \rho(i)} := b z_{k_0, \rho(i)}'$.
      Finally for the queries to the other authorities $k \in A \setminus \{k_0\}$, the simulator chooses uniformly at random a vector $\vec{t}_k \in \mathbb{Z}_p^n$ such that $t_{k, 1} = -r_k$ and implicitly sets the shares $\lambda_{k,i} = b \sum_{j=1}^n M_{i,j} t_{k,j}$ by computing
      $$
        K_{k,i} =
        \begin{cases}
          (g^b)^{\frac{\sum_{j=1}^n M_{i,j} t_{k,j}}{z_{k, \rho(i)}'}} = g^{\frac{b \sum_{j=1}^n M_{i,j} t_{k,j}}{z_{k, \rho(i)}'}} = g^{\frac{\lambda_{k,i}}{z_{k, \rho(i)}}} &\text{ if } i \in S \\
          g^{\frac{\sum_{j=1}^n M_{i,j} t_{k,j}}{z_{k, \rho(i)}'}} = g^{\frac{b \sum_{j=1}^n M_{i,j} t_{k,j}}{b z_{k, \rho(i)}'}} = g^{\frac{\lambda_{k,i}}{z_{k, \rho(i)}}} &\text{ if } i \notin S
        \end{cases}
      $$

    \paragraph*{Challenge}
      The adversary gives two messages $m_0, m_1$ to the simulator.
      It flips a coin $\mu$.
      It creates:
      \begin{align*}
        C' &= m_{\mu} \cdot T \stackrel{*}{=} m_{\mu} \cdot e(g,g)^{a b s} \\
        &=  m_{\mu} \cdot \left(e(g,g)^{(a b + b\left(\sum_{k \in A \setminus \{k_0\}} r_k\right)}\prod_{k \in A \setminus \{k_0\}} e(g,g)^{b r_k} \right)^s \nonumber \\
        C_{k,i} &= (g^s)^{z_{k, \rho(i)}'} = g^{s z_{k, \rho(i)}} \qquad k \in A, \quad i \in S
      \end{align*}
      Where the equality $\stackrel{*}{=}$ holds if and only if the BDH challenge was a valid tuple (i.e. $T$ is non-random).

    \paragraph*{Phase II}
      During this phase the simulator acts exactly as in \emph{Phase~I}.

    \paragraph*{Guess}
      The adversary will eventually output a guess $\mu'$ of $\mu$.
      The simulator then outputs $0$ to guess that $T = e(g, g)^{a b s}$ if $\mu' = \mu$; otherwise, it outputs $1$ to indicate that it believes $T$ is a random group element in $\mathbb{G}_2$.
      In fact when $T$ is not random the simulator $\mathcal{B}$ gives a perfect simulation so it holds:
      $$
        Pr\left[\mathcal{B}\left(\vec{y},T=e(g,g)^{a b s}\right)=0\right] = \frac{1}{2} + \epsilon
      $$
      On the contrary when $T$ is a random element $R \in \mathbb{G}_2$ the message $m_{\mu}$ is completely hidden from the adversary point of view, so:
      $$
        Pr\left[\mathcal{B}\left(\vec{y},T=R\right)=0\right] = \frac{1}{2}
      $$
      Therefore, $\mathcal{B}$ can play the decisional BDH game with non-negligible advantage$~\frac{\epsilon}{2}$.
  \end{proof}


\section{Related Works and Final Comments}
\label{conc}

  Our scheme gives a solution addressing the problem of faith in the authority, specifically the concerns arisen by key escrow and clearance check.
  Key escrow is a setting in which a party (in this case the authority) may obtain access to private keys and thus it can decrypt any ciphertext.
  Normally the users have faith in the authority and assume that it will not abuse its powers.
  The problem arises when the application does not plan a predominant role and there are trust issues selecting any third party that should manage the keys.
  In this situation the authority is seen as \emph{honest but curious}, in the sense that it will provide correct keys to users (then it is not malicious) but will also try to access to data beyond its competence.
  It is clear that as long as a single authority is the unique responsible to issue the keys, there is no way to prevent key escrow.
  Thus the need for multi-authority schemes arises.

  The second problem is more specific for KP-ABE.
  In fact, the authority has to assign to each user an appropriate access structure that represents what the user can and cannot decrypt.
  Therefore, the authority has to be trusted not only to give correct keys and to not violate the privacy, but also to perform correct checks of the users' clearances and to assign correct access structures accordingly.
  Therefore, in addition to satisfying the requirements of not being \emph{malicious} and not being \emph{curious}, the authority must also not have been \emph{breached}, in the sense that a user's keys must embed access structures that faithfully represent that user's level of clearance, and that no one has access to keys with a higher level of clearance than the one they are due.
  In this case, to add multiple authorities to the scheme gives to the encryptor the opportunity to request more guarantees about the legitimacy of the decryptor's clearance.
  In fact, each authority checks the users independently, so the idea  is to request that the decryption proceeds successfully only when a key for each authority of a given set $A$ is used.
  This means that the identity of the user has been checked by every selected authority, and the choice of these by the encryptor models the trust that he has in them.
  Note that if these authorities set up their parameters independently and during encryption these parameters are bound together irrevocably, then no authority can single-handedly decrypt any ciphertext and thus key escrow is removed.
  So our KP-ABE schemes guarantee a protection against both breaches and curiosity.

  The scheme proposed has very short single-keys (just one element per row of the access matrix) that compensates for the need of multiple single-keys (one for cited authority) in the decryption.
  Ciphertexts are also very short (the number of elements is linear in the number of authorities times the number of attributes under which it has been encrypted) thus the scheme is efficient under this aspect.
  Moreover, there are \emph{no} pairing computations involved during encryption and this means significant advantages in terms of encryption times.
  Decryption time is not constant in the number of pairings (e.g. as in the scheme presented in \cite{hohenberger2013attribute} or the one in \cite{waters2011ciphertext}) but requires $\sum_{k\in A} l_k$ pairings where $A$ is the set of authorities involved in encryption and $l_k$ is the number of rows of the access matrix of the key given by authority $k$, so to maintain the efficiency of the scheme only a few authorities should be requested by the encryptor.

  Taking a more historical perspective, the problem of multi-authority ABE is not novel and a few solution have been proposed.
  The problem of building ABE systems with multiple authorities was proposed by Sahai and Waters. This problem with the presence of a central authority was firstly considered by Chase \cite{chase2007multi}  and then improved by Chase and Chow \cite{chase2009improving}, constructing simple-threshold schemes in the case where attributes are divided in disjoint sets, each controlled by a different authority.
  These schemes are also shown to be extensible from simple threshold to KP-ABE, but retaining the partition of attributes and requiring the involvement of every authority in the decryption.
  In those works the main goal is to relieve the central authority of the burden of generating key material for every user and add resiliency to the system.
  Multiple authorities manage the attributes, so that each has less work and the whole system does not get stuck if one is down.
  Another approach has been made by Lin et al. \cite{lin2010secure} where a central authority is not needed but a parameter directly sets the efficiency and number of users of the scheme.

  More interesting results have been achieved for CP schemes, in which the partition of the attributes makes more sense, for example \cite{muller2009distributed}.
  The most recent and interesting result may be found in \cite{lewko2011decentralizing}, where Lewko and Waters propose a scheme where is not needed a central authority or coordination between the authorities, each controlling disjoint sets of attributes.
  They used composite bilinear groups and via Dual System Encryption (introduced by Waters \cite{waters2009dual} with techniques developed with Lewko \cite{lewko2010new}) proved their scheme fully secure following the example of Lewko et al. \cite{lewko2010fully}.
  They allow the adversary to statically corrupt authorities choosing also their master key.
  Note however that they did not specifically address key escrow but distributed workload.

  Our results of this article retain relevance since they address a different setting.
  In fact, with this extensions the differences in the situations of ciphertext-policy ABE and KP-ABE model become more distinct.
  For example a situation that suits the scheme proposed here, but not the one of Lewko and Waters is the following.
  Consider company branches dislocated on various parts of the world, each checking its personnel and giving to each an access policy (thus acting as authorities).
  This scheme allows encryptions that may be decrypted by the manager of the branch (simply use only one authority as in classic ABE) but also more secure encryptions that require the identity of the decryptor to be guaranteed by more centers, basing the requirements on which branches are still secure and/or where a user may actually authenticate itself.

  Moreover, we observe that although the scheme of \cite{lewko2011decentralizing} is proven fully secure (against selective security), the construction is made in composite bilinear groups.
  It is in fact compulsory when using Dual System encryption, but this has drawbacks in terms of group size (integer factorization has to be avoided) and the computations of pairings and group operations are less efficient.
  This fact leads to an alternative construction in prime order groups in the same paper, that however is proven secure only in the generic group and random oracle model.
  These considerations demonstrate that our construction in prime groups under basic assumptions retain validity and interest.

\paragraph*{Aknowledgements}
  Most results in this paper are contained in the first's author Msc. thesis \cite{tesi5-longo} who wants to thank his supervisors: the other two authors.

\bibliographystyle{splncs}
\bibliography{bibtexABE}

\begin{thebibliography}{10}
\providecommand{\url}[1]{\texttt{#1}}
\providecommand{\urlprefix}{URL }

\bibitem{attrapadung2012attribute}
Attrapadung, N., Herranz, J., Laguillaumie, F., Libert, B., De~Panafieu, E.,
  R{\`a}fols, C., et~al.: Attribute-based encryption schemes with constant-size
  ciphertexts. Theoretical Computer Science  422,  15--38 (2012)

\bibitem{beimel1996secure}
Beimel, A.: Secure schemes for secret sharing and key distribution. Ph.D.
  thesis, Technion-Israel Institute of technology, Faculty of computer science
  (1996)

\bibitem{bethencourt2007ciphertext}
Bethencourt, J., Sahai, A., Waters, B.: Ciphertext-policy attribute-based
  encryption. In: Proc. of {SP}~07. pp. 321--334 (2007)

\bibitem{boneh2001identity}
Boneh, D., Franklin, M.: Identity-based encryption from the weil pairing. In:
  Advances in Cryptology—CRYPTO 2001. pp. 213--229. Springer (2001)

\bibitem{chase2007multi}
Chase, M.: Multi-authority attribute based encryption. In: Theory of
  Cryptography, pp. 515--534. Springer (2007)

\bibitem{chase2009improving}
Chase, M., Chow, S.S.: Improving privacy and security in multi-authority
  attribute-based encryption. In: Proceedings of the 16th ACM conference on
  Computer and communications security. pp. 121--130. ACM (2009)

\bibitem{goyal2006attribute}
Goyal, V., Pandey, O., Sahai, A., Waters, B.: Attribute-based encryption for
  fine-grained access control of encrypted data. In: Proc. of {CCS}~06. pp.
  89--98 (2006)

\bibitem{hohenberger2013attribute}
Hohenberger, S., Waters, B.: Attribute-based encryption with fast decryption.
  In: Proc. of {PKC}~13, LNCS, vol. 7778, pp. 162--179 (2013)

\bibitem{lewko2010fully}
Lewko, A., Okamoto, T., Sahai, A., Takashima, K., Waters, B.: Fully secure
  functional encryption: Attribute-based encryption and (hierarchical) inner
  product encryption. In: Proc. of {EUROCRYPT}~10, LNCS, vol. 7881, pp. 62--91
  (2010)

\bibitem{lewko2010new}
Lewko, A., Waters, B.: New techniques for dual system encryption and fully
  secure hibe with short ciphertexts. In: Theory of Cryptography, LNCS, vol.
  5978, pp. 455--479 (2010)

\bibitem{lewko2011decentralizing}
Lewko, A., Waters, B.: Decentralizing attribute-based encryption. In: Proc. of
  {EUROCRYPT}~11, LNCS, vol. 6632, pp. 568--588 (2011)

\bibitem{lin2010secure}
Lin, H., Cao, Z., Liang, X., Shao, J.: Secure threshold multi authority
  attribute based encryption without a central authority. Information Sciences
  180(13),  2618--2632 (2010)

\bibitem{liu2010efficiently}
Liu, Z., Cao, Z.: On efficiently transferring the linear secret-sharing scheme
  matrix in ciphertext-policy attribute-based encryption. IACR Cryptology
  ePrint Archive  (2010)

\bibitem{tesi5-longo}
Longo, R.: {Attribute Based Encryption with Algebraic Methods}. Master's thesis
  (laurea magistrale), University of Trento, Department of Mathematics (2012)

\bibitem{muller2009distributed}
M{\"u}ller, S., Katzenbeisser, S., Eckert, C.: Distributed attribute-based
  encryption. In: Information Security and Cryptology--ICISC 2008, pp. 20--36.
  Springer (2009)

\bibitem{ostrovsky2007attribute}
Ostrovsky, R., Sahai, A., Waters, B.: Attribute-based encryption with
  non-monotonic access structures. In: Proc. of {CCS}~07. pp. 195--203 (2007)

\bibitem{sahai2005fuzzy}
Sahai, A., Waters, B.: Fuzzy identity-based encryption. In: Advances in
  Cryptology--EUROCRYPT 2005, pp. 457--473. Springer (2005)

\bibitem{shamir1985identity}
Shamir, A.: Identity-based cryptosystems and signature schemes. In: Advances in
  cryptology. pp. 47--53. Springer (1985)

\bibitem{waters2009dual}
Waters, B.: Dual system encryption: Realizing fully secure ibe and hibe under
  simple assumptions. In: Proc. of {CRYPTO}~09, LNCS, vol. 5677, pp. 619--636
  (2009)

\bibitem{waters2011ciphertext}
Waters, B.: Ciphertext-policy attribute-based encryption: An expressive,
  efficient, and provably secure realization. In: Proc. of {PKC}~11, LNCS, vol.
  6571, pp. 53--70 (2011)

\end{thebibliography}

\end{document}